\newtheorem{theorem}{Theorem}
\newtheorem{lemma}[theorem]{Lemma}
\newcommand{\remove}[1]{}
\newcommand{\ignore}[1]{}
\providecommand{\ceil}[1]{\left\lceil#1\right\rceil}
\newcommand{\ceilsqrt}[1]{\left\lceil\sqrt{#1}\,\right\rceil}
\newcommand{\concept}[1]{\textbf{#1}}
\providecommand{\ceil}[1]{\left\lceil#1\right\rceil}
\begin{document}

\title{Slightly smaller splitter networks}

\author{James Aspnes\thanks{Supported in part by NSF grant
CCF-0916389.}\\Yale University}

\maketitle

\begin{abstract}
The classic renaming protocol of Moir and Anderson~\cite{MoirA1995}
uses a network of $\Theta(n^2)$ splitters to assign unique names to
$n$ processes with unbounded initial names.
We show how to reduce this bound to $\Theta(n^{3/2})$ splitters.
\end{abstract}

\section{Introduction}
\label{section-introduction}

We show how to reduce the $\Theta(n^2)$ space and output
namespace of renaming using a splitter network in the style of Moir
and Anderson~\cite{MoirA1995} to $\Theta(n^{3/2})$.  
The individual time complexity remains $\Theta(n)$, which is optimal for deterministic
renaming given an unbounded initial namespace~\cite{ChlebusK2008}.

Our construction is based on alternating small Moir-Anderson grids
with layers of small binary trees.
The resulting
renaming algorithm is not even remotely competitive with the
tight output namespace and polylogarithmic time complexity of the best
currently known randomized renaming
algorithm~\cite{AlistarhAGGG2010}, and requires more
space, more time, and a larger output
namespace than the best currently known deterministic
algorithm~\cite{ChlebusK2008} in the case where 
the initial names are
sub-exponentially large.
However, it uses less space than any other currently known algorithm
when the initial names are unbounded, and might perhaps be useful as
an initial stage before a better algorithm under such circumstances.

\subsection{Splitter networks}
\label{section-splitter-networks}

A \concept{splitter}~\cite{MoirA1995} is a shared-memory object,
implemented from two multi-writer atomic registers, with a
single operation that returns a value $\Right$, $\Down$,
or $\Stop$.  Splitters satisfy the following conditions:

\begin{itemize}
\item In any execution of a splitter, at most one process obtains the
value $\Stop$.
\item If only one process invokes a splitter, that process obtains
$\Stop$.
\item If at least two processes invoke a splitter, at least one
process obtains either $\Stop$ or $\Right$ and at least one process
obtains either $\Stop$ or $\Down$.
\end{itemize}

We can think of splitters as routing components of a network, where
the $\Right$ and $\Down$ outputs send processes along virtual
``wires'' to further splitters.
Figure~\ref{figure-moir-anderson} shows the splitter network used by
Moir and Anderson~\cite{MoirA1995}.
It consists of two-dimensional triangular grid of
$\binom{m}{2}$ 
splitters, containing splitters at all positions $(i,j)$ where $0 \le
i, j \le n$ and $i+j \le m$, where
each process enters the grid through the splitter at
$(0,0)$, and at each splitter $(i,j$) stops if it receives $\Stop$,
proceeds to $(i+1,j)$ if it receives $\Right$, and proceeds to
$(i,j+1)$ if it receives $\Down$.  Moir and Anderson show that 
in any execution in which $m$
processes follow this procedure, every process eventually receives
$\Stop$ at some splitter, before reaching one of the $2m$ output
wires.

\begin{figure}
\begin{center}
\includegraphics[scale=0.4]{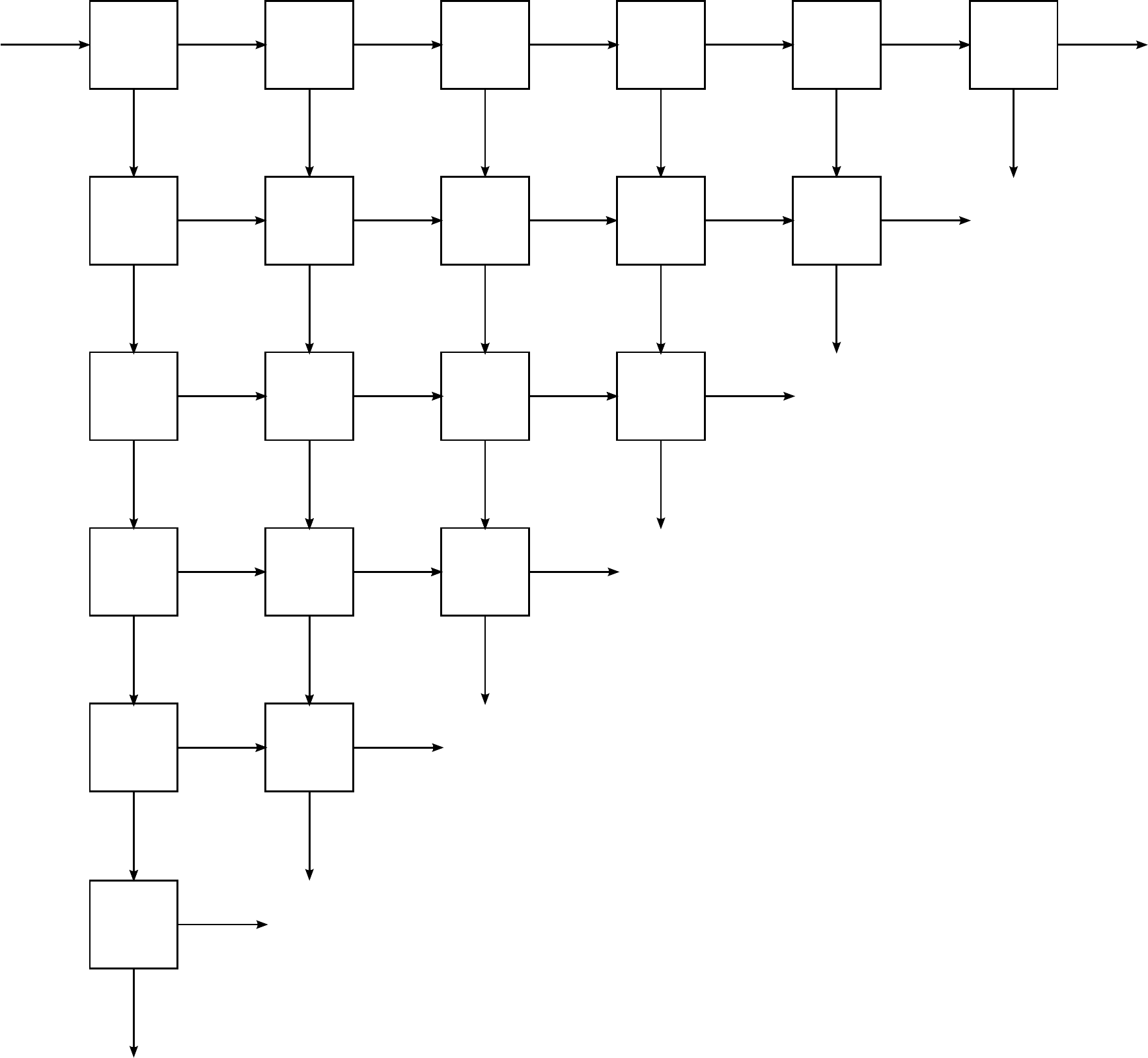}
\end{center}
\caption{A $6 \times 6$ Moir-Anderson grid.}
\label{figure-moir-anderson}
\end{figure}

\begin{figure}
\begin{center}
\includegraphics[scale=0.4]{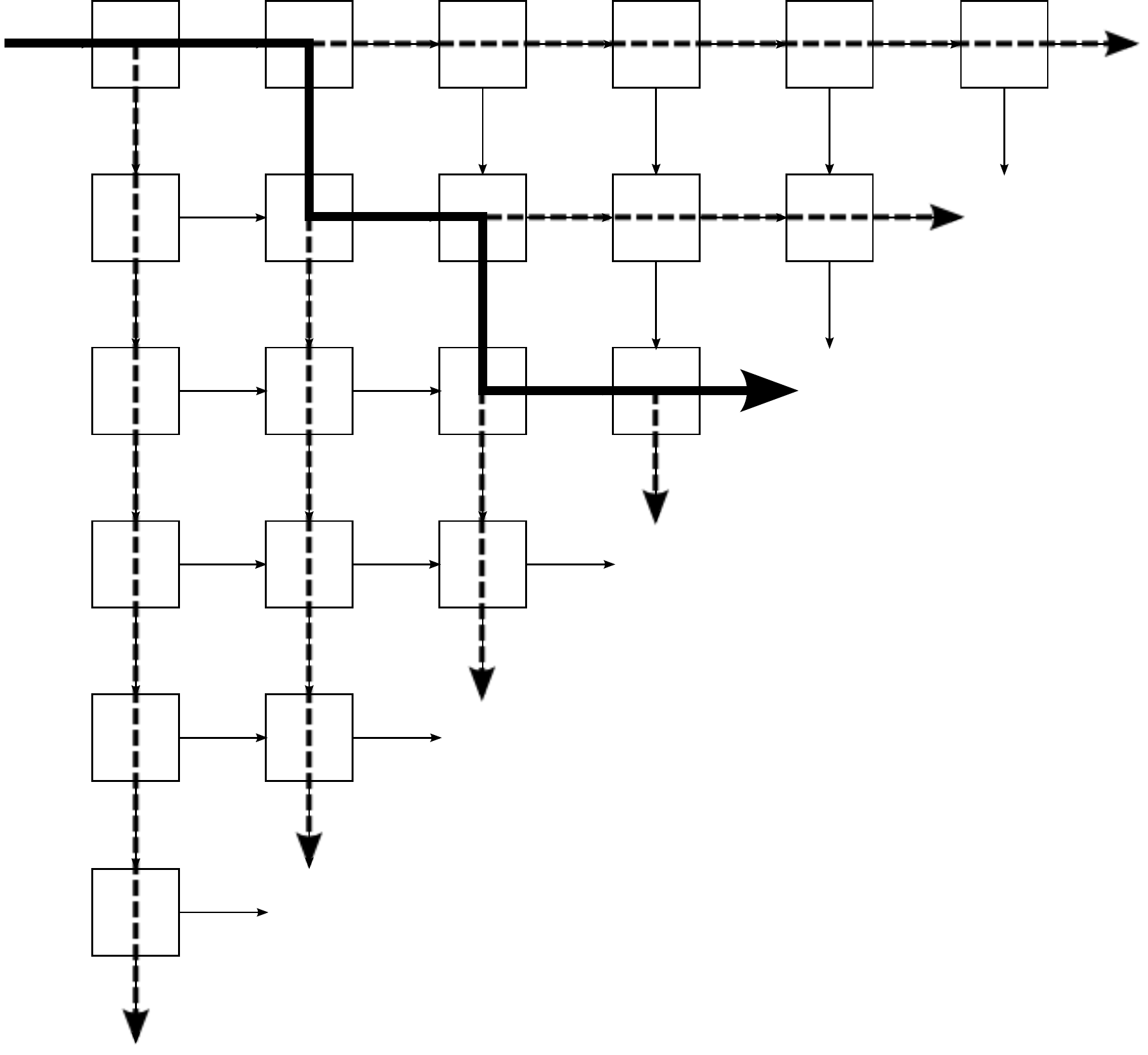}
\end{center}
\caption{Path taken by a single process through a $6 \times 6$
Moir-Anderson grid (heavy path), 
and the 6 disjoint paths it spawns (dashed paths).}
\label{figure-disjoint-paths}
\end{figure}

A simple explanation of this fact can be obtained by supposing that some
process $p$ reaches an output wire, and looking at the path it took to get
their (see Figure~\ref{figure-disjoint-paths}).  Each splitter on this
path must handle at least two processes (or $p$ would have stopped at
that splitter).  So some other process leaves on the other output
wire, either $\Right$ or $\Down$.  If we draw a path from each of
these wires that continues $\Right$ or $\Down$ to the end of the grid,
then along each of these $m$ disjoint paths either some splitter stops
a process, or some process reaches a final output wire, each of which
is at a distinct splitter.  
It follows that:

\begin{lemma}
\label{lemma-moir-anderson}
In an $m\times m$ Moir-Anderson grid, either all processes stop, or
\begin{align}
(\text{\# of nonempty output wires})
+ (\text{\# of stopped processes})
&\ge m+1,
\text{and}
\label{eq-wire-bound}
\\
(\text{\# of nonempty output splitters})
+ (\text{\# of stopped processes})
&\ge m.
\label{eq-splitter-bound}
\end{align}
\end{lemma}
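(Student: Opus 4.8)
The plan is to make rigorous the disjoint‑paths argument sketched above. Suppose some process $p$ leaves the grid on an output wire (otherwise every process stops and there is nothing to prove). Let $\pi = (s_0,\ldots,s_{m-1})$ be the splitters $p$ visits, with $s_0 = (0,0)$: each move of $\pi$ is $\Right$ or $\Down$ and so increases $i+j$ by one, hence $\pi$ meets exactly one splitter on each anti‑diagonal $i+j = 0,1,\ldots,m-1$, ending at a boundary splitter $s_{m-1}$ from which $p$ leaves the grid on one of the $2m$ output wires. The first thing I would record is the local consequence of the splitter conditions at each $s_k$: since $p$ did not receive $\Stop$ there, at least two processes invoked $s_k$, so the third splitter condition supplies a process $q_k \ne p$ that received $\Stop$ or the output $p$ did \emph{not} take (if $p$ went $\Right$ then $q_k$ received $\Stop$ or $\Down$, and symmetrically).

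For each $k$ I then define a \emph{spawned path} $\sigma_k$: it starts on the wire out of $s_k$ that $p$ did not use and proceeds straight in that direction (all $\Down$ if $p$ went $\Right$ at $s_k$, all $\Right$ if $p$ went $\Down$) until it runs off the grid; exactly $\sigma_{m-1}$ is empty, since $s_{m-1}$ is the only boundary splitter on $\pi$. The crucial combinatorial claim is that $\pi,\sigma_0,\ldots,\sigma_{m-1}$ are pairwise vertex‑ and edge‑disjoint: a ``$\Down$''‑type $\sigma_k$ lives in the one column $\pi$ leaves for good just after $s_k$, a ``$\Right$''‑type $\sigma_k$ lives in a row $\pi$ has already left, two spawned paths of the same orientation lie in distinct columns (resp.\ rows) because $\pi$ occupies a contiguous block of each column (resp.\ row) and leaves it only once, and monotonicity of $\pi$ forbids a ``$\Down$''‑type path from meeting a ``$\Right$''‑type one. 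I expect this disjointness bookkeeping, especially handling the degenerate empty path correctly, to be the main obstacle.

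Next I would show each of these $m+1$ paths ``pays for itself.'' The path $\pi$ ends at $p$'s output wire, which is nonempty. For a nonempty $\sigma_k$ I walk along it: $q_k$ reaches its first splitter, and inductively, whenever some process reaches a splitter of $\sigma_k$, either some process stops there, or at least two processes are present so the third splitter condition pushes a process one more step along $\sigma_k$'s fixed direction; finiteness of $\sigma_k$ forces this to end either with a stopped process at a splitter of $\sigma_k$ or with a process leaving the grid on $\sigma_k$'s terminal wire. For the empty $\sigma_{m-1}$, the process $q_{m-1}$ either stops at $s_{m-1}$ or leaves on that output wire. Since the paths are edge‑disjoint all these output wires are distinct, and since a process stops at most once while the stops we exhibit sit at distinct splitters, the stopped processes are distinct as well; wires and processes are different kinds of object. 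Summing over $\pi$ and $\sigma_0,\ldots,\sigma_{m-1}$ gives \eqref{eq-wire-bound}.

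Finally, for \eqref{eq-splitter-bound} I would map each counted wire to the boundary splitter it leaves from. Two counted wires share a boundary splitter only when one counted path enters it and exits $\Right$ while another exits $\Down$; vertex‑disjointness of the non‑degenerate paths shows the sole way this can happen is at $s_{m-1}$, with $p$ exiting one way and $\sigma_{m-1}$ (when $q_{m-1}$ leaves rather than stopping) the other. So at most one such coincidence occurs, the number of distinct nonempty output splitters is at least one less than the wire count used for \eqref{eq-wire-bound}, and restoring the stopped processes yields a total of at least $m$, which is \eqref{eq-splitter-bound}.
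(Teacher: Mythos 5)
Your proof is correct and follows essentially the same disjoint-paths argument the paper sketches immediately before the lemma: the path of the escaping process $p$ spawns $m$ pairwise-disjoint straight paths, each of which pays for itself with either a stopped process or a distinct nonempty output wire. The paper leaves the disjointness bookkeeping and the single wire-to-splitter collision at $s_{m-1}$ (which accounts for the gap between the $m+1$ and $m$ bounds) implicit; your write-up simply makes those details explicit.
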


An immediate corollary of the first bound \eqref{eq-wire-bound} 
is that an $m\times m$ Moir-Anderson grid stops
any set of $m$ or fewer processes, because otherwise there are not
enough to supply the $m+1$ processes in the inequality.
The second bound \eqref{eq-splitter-bound} will be useful in our
improved construction.

\section{Blockers}
\label{section-blockers}

Let an \concept{$(m,k)$-blocker} be a splitter network 
with the property that if $m$ processes
enter the network at its designated input gate, at least $k$ processes
stop somewhere in the network.  A single splitter is a $(1,1)$-blocker
(but is only an $(m,0)$-blocker for any $m > 1$).
An $m \times m$ Moir-Anderson grid is an $(m,m)$-blocker.

We will build an $(n,n)$-blocker out of $O(n^{3/2})$ splitters using
a sequence of $\sqrt{n}$ stages, each of which is an
$(n,\sqrt{n}\,)$-blocker.  After each stage,
all processes that have not stopped are fed into
the single input of the next blocker.  The overall structure is thus
similar to the cascaded-tree splitter networks considered
by~\cite{AttiyaKPWW2006}
but we obtain much lower space complexity
by using a combination of Moir-Anderson grids and binary trees 
in each stage instead of just a single
large binary tree.

The essential idea of each $(n,\sqrt{n}\,)$-blocker is to use a
Moir-Anderson grid of size $2 \sqrt{n}$ and apply inequality
\eqref{eq-splitter-bound} from
Lemma~\ref{lemma-moir-anderson} to show that at least $2 \sqrt{n}$
processes either stop inside the grid or leave the grid through
distinct output splitters.
Since there are only $n$ processes, fewer than $\sqrt{n}$ output
splitters   
will get more than $\sqrt{n}$ processes.  Deducting these overloaded
splitters from the $2 \sqrt{n}$ total
gives at least $\sqrt{n}$ output splitters that either (a) get between $1$ and
$\sqrt{n}$ processes, or (b) correspond to a stopped process inside
the grid.  By attaching a $(\sqrt{n},1)$-blocker to both outputs of
all $2\sqrt{n}$ splitters in the last layer, 
we stop at least one process for each splitter in class (a),
for a total of $\sqrt{n}$ stopped processes.

We have not yet shown how to build a $(\sqrt{n},1)$-blocker.  Here we
can just use a binary tree with $\sqrt{n}$ splitters:
\begin{lemma}
\label{lemma-binary-tree}
Any binary tree of $m$ splitters is an $(m,1)$-blocker.
\end{lemma}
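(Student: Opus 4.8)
The plan is to prove the contrapositive directly, for every $m \ge 1$: assuming $m$ processes enter the tree $T$ (rooted at its designated input gate) and no process ever stops, I derive a contradiction by counting the output wires of $T$. Recall first a purely combinatorial fact about a binary tree on $m$ nodes: it has exactly $m+1$ empty child-slots (``output wires''), namely $2L_0 + L_1$, where $L_i$ is the number of splitters with exactly $i$ children; this equals $m+1$ because $L_0 + L_1 + L_2 = m$ while $L_1 + 2L_2 = m-1$ counts the internal edges, whence $L_0 = L_2 + 1$ and $2L_0 + L_1 = (L_1 + 2L_2) + 2 = m+1$.

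First I would show that, under the no-stop assumption, every splitter of $T$ is visited by at least two processes. Indeed, by the second splitter property any splitter reached by exactly one process returns $\Stop$ to it; so no splitter is visited by exactly one process, and every visited splitter is visited by $\ge 2$. Then, at any visited splitter $s$, the third splitter property forces at least one process out along each of $s$'s two outgoing directions; in particular, if $s$ has a child $c$ in $T$, some process reaches $c$, so $c$ is visited. Since the root is visited (all $m \ge 1$ processes start there), this propagates downward and every node of $T$ is visited.

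Now I would count the nonempty output wires two ways. On the one hand, since $T$ is finite and a process only ever moves toward the leaves, each of the $m$ processes (none of which stops) eventually leaves $T$ through exactly one output wire; as each nonempty wire absorbs at least one process, at most $m$ output wires are nonempty. On the other hand, every leaf of $T$ is visited by $\ge 2$ processes none of which stop, so by the third splitter property both of its output wires carry a process; likewise the single output wire of every one-child splitter carries a process. Counting these (distinct) wires gives at least $2L_0 + L_1 = m+1$ nonempty output wires, contradicting the bound of $m$. Hence some process stops, and $T$ is an $(m,1)$-blocker.

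The delicate step is the downward propagation of ``visited'': one might worry that an entire subtree receives no process, but the third splitter property rules this out precisely because a splitter carrying $\ge 2$ processes sends at least one process each way, and that is exactly what forces all $m+1$ wires to be nonempty rather than fewer. (By contrast, a direct induction on $m$ that partitions the processes at the root runs into an unbalanced-split issue — a subtree with $m' < m$ splitters may receive more than $m'$ processes, which is not literally the hypothesis for $m'$ — so the global counting argument is cleaner.) The identity $2L_0 + L_1 = m+1$ for an arbitrary binary tree (single-child nodes allowed) should be stated explicitly but is routine.
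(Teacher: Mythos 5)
Your proof is correct, but it takes a genuinely different route from the paper's. The paper argues by induction on $m$: if no process stops at the root, at least one process enters each subtree, and the induction hypothesis forces at least $m_i+1$ processes into the subtree of size $m_i$ to avoid a stop there, hence at least $m_1+m_2+2 = m+1$ processes in total. You instead run a global counting argument: under the no-stop assumption every visited splitter handles at least two processes and therefore pushes at least one process out along each of its two outgoing directions, so all $m+1$ output wires of the tree become nonempty, while only $m$ processes are available to populate them. This is essentially the disjoint-paths/pigeonhole argument the paper already uses for Lemma~\ref{lemma-moir-anderson} on the Moir--Anderson grid, transplanted to trees, so it is arguably more uniform with the rest of the paper. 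Your parenthetical about the unbalanced split is also well taken: since a network can be an $(m,1)$-blocker without being an $(m',1)$-blocker for $m' > m$ (more processes can mean fewer guaranteed stops), the paper's induction implicitly relies on the strengthened hypothesis that a tree of $m_i$ splitters stops a process whenever it receives between $1$ and $m_i$ processes; your global count sidesteps this by never partitioning the processes at the root. What each approach buys: the induction is shorter and needs no structural facts about trees, while your argument requires the identity that a binary tree on $m$ nodes has exactly $m+1$ empty child-slots (which you state and verify correctly) but avoids any care about the form of the induction hypothesis.
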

\begin{proof}
By induction on $m$.  A single splitter is a $(1,1)$-blocker.  Given a
binary tree of $m$ splitters accessed by at least one process, either
the root node stops a process, or it sends at least one process to
each of its two subtrees.  Let $m_1$ and $m_2$ be the sizes of the two
subtrees; by the induction hypothesis, the subtrees are $(m_1,1)$ and
$(m_2,1)$ blockers, respectively.  So 
for no process to be blocked, we must send at least $m_1+1$ processes
to the first subtree and $m_2+1$ processes to the second, for a total
of $m_1+m_2+2 = m+1$ processes.  It follows that the full tree is an
$(m,1)$-blocker.
\end{proof}

Figure \ref{figure-blocker} shows an example of an
$(n,\sqrt{n}\,)$-blocker constructed in this way.
This uses
$\binom{2 \ceilsqrt{n}}{2} = \left(2+o\left(1\right)\right) n$ splitters for the
Moir-Anderson grid, plus $(2\ceilsqrt{n})\ceilsqrt{n} =
\left(2+o\left(1\right)\right)n$ splitters for the output blockers.
The depth of the blocker is $2\ceilsqrt{n} + \ceil{\,\lg n} =
\left(2+o\left(1\right)\right) \sqrt{n}$.  Summarizing:
\begin{lemma}
\label{lemma-blocker}
For any $n$, there is an $(n,\sqrt{n})$-blocker with
$(4+o(1))n$ splitters and depth $(2+o(1))\sqrt{n}$.
\end{lemma}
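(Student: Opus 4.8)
The plan is to turn the sketch just given into a formal argument; the construction and its resource accounting have essentially been written down already, so the real content is verifying the blocking guarantee. For the construction, fix $m=2\ceilsqrt{n}$, take an $m\times m$ Moir--Anderson grid $G$ with all processes entering at $(0,0)$, and below each splitter of the last diagonal layer of $G$ attach a balanced binary tree of $\ceilsqrt{n}$ splitters, routing both output wires of that splitter into it; by Lemma~\ref{lemma-binary-tree} each such tree is a $(\ceilsqrt{n},1)$-blocker. As computed above, the grid contributes $\binom{2\ceilsqrt{n}}{2}=(2+o(1))n$ splitters and the trees contribute $(2\ceilsqrt{n})\ceilsqrt{n}=(2+o(1))n$ more, for $(4+o(1))n$ in total; the depth is $2\ceilsqrt{n}+\ceil{\lg n}=(2+o(1))\sqrt{n}$.

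For the blocking property, send $n$ processes into the network and apply inequality~\eqref{eq-splitter-bound} of Lemma~\ref{lemma-moir-anderson} to $G$. If all $n$ of them stop inside $G$ we are done, since $n\ge\ceilsqrt{n}$. Otherwise, let $S$ denote the number of processes that stop inside $G$; then the number of last-layer splitters reached by at least one process, plus $S$, is at least $m=2\ceilsqrt{n}$. Call such a splitter \emph{heavy} if more than $\sqrt{n}$ processes reach it. At most $n$ processes leave $G$ altogether and each leaving process reaches exactly one last-layer splitter, so fewer than $\sqrt{n}$ of those splitters are heavy, i.e.\ at most $\ceilsqrt{n}-1$ of them; discarding these from the count leaves at least $2\ceilsqrt{n}-(\ceilsqrt{n}-1)=\ceilsqrt{n}+1$ accounted for by $S$ together with \emph{light} nonempty last-layer splitters, meaning ones reached by between $1$ and $\ceilsqrt{n}$ processes.

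It remains to charge to each light nonempty last-layer splitter one stopped process, distinct from all the others. If the splitter is reached by exactly one process, that process gets $\Stop$ at the splitter itself. If it is reached by $r\ge 2$ processes and one of them stops there, we are again done; otherwise between $2$ and $\ceilsqrt{n}$ processes enter the attached binary tree of $\ceilsqrt{n}$ splitters, and that tree stops at least one of them. A process that stops inside $G$ never reaches a last-layer splitter, and any process reaches at most one last-layer splitter, so the stopped processes produced this way, together with the $S$ stopped inside $G$, are all distinct. Therefore the total number of stopped processes is at least $S$ plus the number of light nonempty last-layer splitters, which is at least $\ceilsqrt{n}+1\ge\sqrt{n}$; the network is an $(n,\sqrt{n})$-blocker.

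The only delicate point --- the step I would double-check most carefully --- is the appeal to Lemma~\ref{lemma-binary-tree} in the previous paragraph: as stated it guarantees a stop only when exactly $m$ processes enter a tree of $m$ splitters, whereas here a light splitter may feed its tree as few as two processes. This is not a real gap, since the induction in the proof of Lemma~\ref{lemma-binary-tree} actually establishes that a binary tree of $m$ splitters stops at least one process whenever between $1$ and $m$ processes enter it; I would state it in that stronger form (or rerun its short induction with the hypothesis ``$1\le r\le m$'') and use that here.
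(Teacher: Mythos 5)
Your construction, the splitter and depth counts, the bound of $\ceilsqrt{n}-1$ on the number of heavy last-layer splitters, and your observation that Lemma~\ref{lemma-binary-tree} needs to be (and easily is) strengthened to ``between $1$ and $m$ entering processes'' all match the paper and are fine. The gap is in the final accounting. You claim that a process that stops inside $G$ never reaches a last-layer splitter, and you use this to conclude that the total number of stops is at least $S$ plus the number of light nonempty last-layer splitters. But a process can stop \emph{at} a last-layer splitter: it is then counted both in $S$ and as the witness you charge to that (nonempty, hence possibly light) splitter. Your own case analysis exhibits this: a light splitter reached by exactly one process contributes only that one stop, which already belongs to $S$. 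Because of this double counting, the quantities you actually control only give $\text{total stops} \ge \max(S,\, L + S_{\mathrm{int}})$, where $S_{\mathrm{int}}$ counts stops strictly before the last layer, and this can be as small as about $(\ceilsqrt{n}+1)/2$. Concretely, for $n=100$ the numerical data ``$8$ heavy last-layer splitters with $11$--$12$ processes each, $6$ light last-layer splitters each reached by a single process that stops there, and no other stops'' satisfies inequality~\eqref{eq-splitter-bound} exactly ($14 + 6 = 20 = m$) and satisfies every hypothesis your argument invokes, yet yields only $6 < 10 = \sqrt{n}$ stopped processes. So the blocking property cannot be derived from inequality~\eqref{eq-splitter-bound} as stated plus the splitter axioms alone; your proof as written does not go through.

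The repair requires reaching back into the \emph{proof} of Lemma~\ref{lemma-moir-anderson} rather than its statement, which is what the paper's informal class~(a)/class~(b) dichotomy is implicitly doing. The $m$ disjoint paths each either contain a stopped process or deliver a process to their (distinct) terminal last-layer splitters; moreover, if the stop on a path occurs at the terminal splitter itself, that splitter is in particular nonempty. Hence one gets the strengthened inequality
$(\text{\# of nonempty output splitters}) + (\text{\# of processes stopped at non-output splitters}) \ge m$.
Running your argument with $S_{\mathrm{int}}$ (stops at non-output splitters) in place of $S$ eliminates the double counting: the witnesses charged to light nonempty last-layer splitters live at those splitters or in their trees, hence are disjoint from the $S_{\mathrm{int}}$ stops, and you recover $\text{total stops} \ge S_{\mathrm{int}} + L \ge \ceilsqrt{n}+1$. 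I would either prove and cite this strengthened form of \eqref{eq-splitter-bound}, or argue directly path by path as the paper does.
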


\begin{figure}
\begin{center}
\includegraphics[scale=0.6]{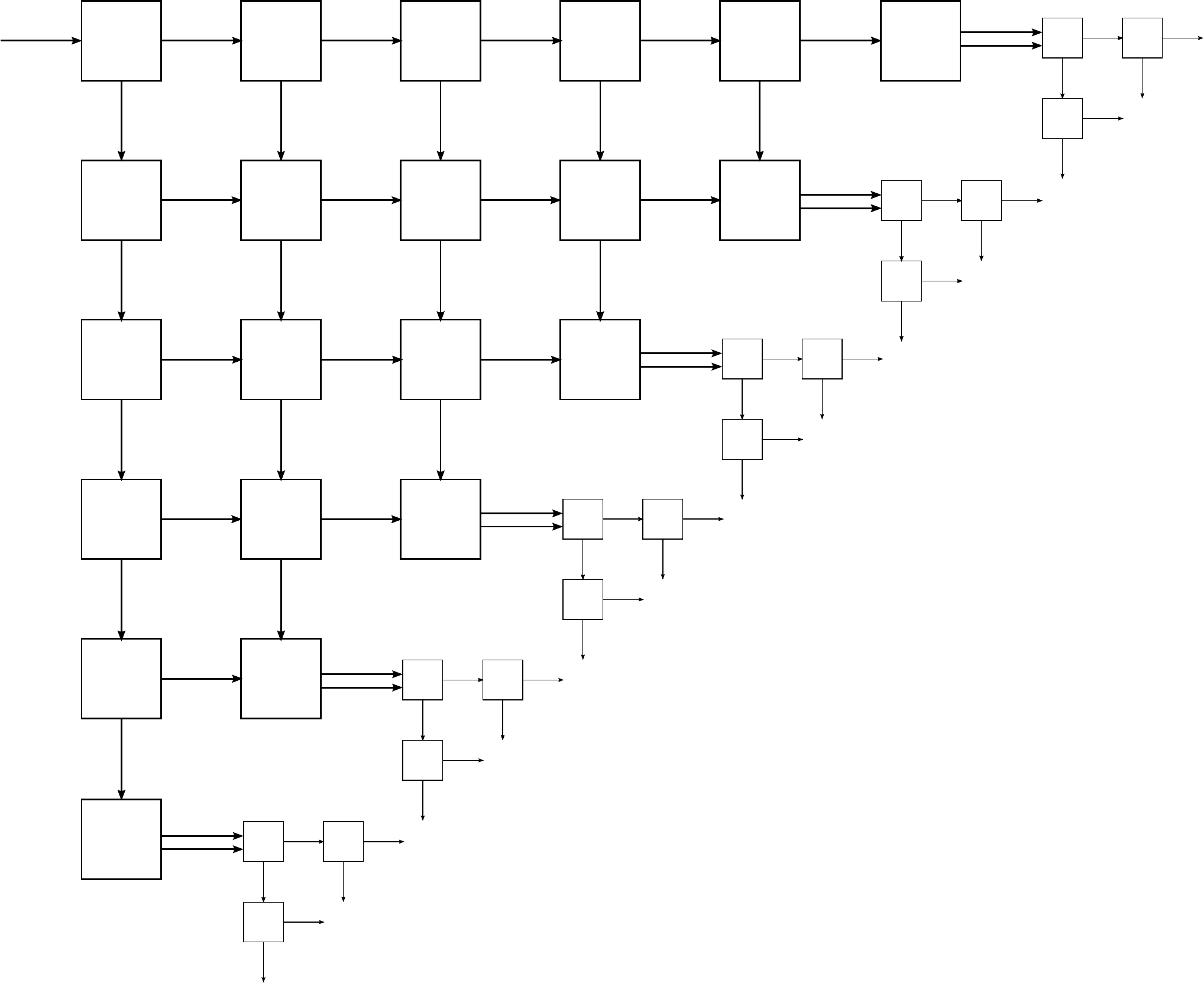}
\end{center}
\caption{A $(9,\sqrt{9})$-blocker, consisting of a
$6\times 6$ Moir-Anderson grid with a
$(3,1)$-blocker on each output wire, implemented as a binary
tree of splitters.}
\label{figure-blocker}
\end{figure}

\section{The full splitter network}
\label{section-the-full-splitter-network}

To obtain the full splitter network, we iterate our
$(n,\sqrt{n}\,)$-blocker $\sqrt{n}$ times.  Since each blocker stops
at least $\sqrt{n}$ processes, every process stops at some stage.
Summing the size and depth of the blockers over all $\sqrt{n}$
iterations gives:
\begin{theorem}
\label{theorem-full-network}
For any $n$, there is a network of $\left(4+o\left(1\right)\right)n^{3/2}$ splitters with depth
$\left(2+o\left(1\right)\right)n$ that
solves renaming deterministically for $n$ processes.
\end{theorem}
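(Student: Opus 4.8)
The plan is to cascade $\ceilsqrt{n}$ independent copies of the $(n,\sqrt{n})$-blocker from Lemma~\ref{lemma-blocker}, wiring every non-stopped output of stage $i$ into the single input gate of stage $i+1$. A process executes the routing procedure stage by stage until some splitter returns $\Stop$, and then outputs the identifier of that splitter. Since a splitter returns $\Stop$ to at most one process in any execution, distinct stopped processes receive distinct names; so once we show that every process stops somewhere, we have a renaming into a namespace contained in the finite set of splitter identifiers of the network. The size and depth bounds then come from adding up the per-stage figures in Lemma~\ref{lemma-blocker}.

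The first and principal step is to show that after $\ceilsqrt{n}$ stages no process is still running. This needs a mild strengthening of the blocker guarantee: the $(n,\sqrt{n})$-blocker of Lemma~\ref{lemma-blocker}, entered by any $m$ processes with $1 \le m \le n$, stops at least $\min(m,\sqrt{n})$ of them. I would obtain this by rerunning the blocker's own analysis with $m$ in place of $n$. If $m \le 2\sqrt{n}$, the size-$2\sqrt{n}$ Moir-Anderson grid alone stops all $m$ processes by the corollary of inequality~\eqref{eq-wire-bound}, which already yields $m \ge \min(m,\sqrt{n})$. If $m > 2\sqrt{n}$, apply inequality~\eqref{eq-splitter-bound}: the number of nonempty output splitters plus the number stopped in the grid is at least $2\sqrt{n}$; since $m \le n$, fewer than $\sqrt{n}$ of those output splitters can receive more than $\sqrt{n}$ processes, so at least $\sqrt{n}$ output splitters are either \emph{light} (receiving $1$ to $\sqrt{n}$ processes) or already witness a stopped process, and each light splitter's attached $(\sqrt{n},1)$-blocker stops one more process. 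Here I use that a binary tree of $\ell$ splitters stops at least one process on any input of $1$ to $\ell$ processes — slightly stronger than Lemma~\ref{lemma-binary-tree} but proved by the same induction, taking this statement as the inductive hypothesis. Either way at least $\sqrt{n}$ processes stop. Now let $n_i$ be the number of processes entering stage $i$, so $n_1 = n$ and $n_{i+1} \le n_i - \min(n_i,\sqrt{n})$; while $n_i \ge \sqrt{n}$ the count drops by at least $\sqrt{n}$ per stage, so $n_{\ceilsqrt{n}} \le n - (\ceilsqrt{n}-1)\sqrt{n} \le \sqrt{n}$, and the first stage $i$ with $n_i \le \sqrt{n}$ stops all $\min(n_i,\sqrt{n}) = n_i$ processes still present. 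Hence $n_{\ceilsqrt{n}+1} = 0$.

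The rest is bookkeeping. Summing Lemma~\ref{lemma-blocker} over the $\ceilsqrt{n}$ stages gives $\ceilsqrt{n}\cdot(4+o(1))n = (4+o(1))n^{3/2}$ splitters and total (chained) depth $\ceilsqrt{n}\cdot(2+o(1))\sqrt{n} = (2+o(1))n$. The depth also bounds the number of splitter operations performed by any single process, so the individual step complexity is $O(n)$, and the output namespace, lying inside the set of splitter identifiers, has size $(4+o(1))n^{3/2}$. The same argument applies verbatim when only $m \le n$ processes participate, so the network solves renaming deterministically for $n$ processes.

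I expect the only real obstacle to be the first step — ensuring the blocker still does enough work when fed fewer than $n$ processes in later stages, which is precisely why the $\min(m,\sqrt{n})$ form (and the matching strengthening of Lemma~\ref{lemma-binary-tree}) is needed in place of the bare $(n,\sqrt{n})$-blocker property. Once that is in hand, termination of every process and the space and depth arithmetic are routine.
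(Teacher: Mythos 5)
Your proposal takes essentially the same route as the paper: cascade $\ceilsqrt{n}$ copies of the $(n,\sqrt{n}\,)$-blocker of Lemma~\ref{lemma-blocker}, observe that every process must stop within $\ceilsqrt{n}$ stages, and sum the per-stage size and depth. The extra care you take is warranted and correct — the paper's one-line argument silently assumes the blocker still stops enough processes when fewer than $n$ enter at later stages, and your strengthened $\min(m,\sqrt{n})$ guarantee (together with the matching strengthening of Lemma~\ref{lemma-binary-tree}, which its inductive proof already delivers) closes that gap cleanly.
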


Though we have assumed a known, fixed bound on the number of processes
$n$, it is not hard to see that the algorithm could be made adaptive
by stringing together geometrically increasingly large networks, with
processes that fail to obtain a name in one network falling through to
the next.  This would give names in the range $O(k^{3/2})$ and time
complexity $O(k)$, where $k$ is the number of participating processes.

\subsection{Conclusions}
\label{section-conclusions}

We have shown that it is possible to build a splitter network that
assigns names to $n$ processes in $O(n)$ individual work using
$O(n^{3/2})$ space (and names).  
Because of the lower bound of Chlebus
and Kowalski~\cite{ChlebusK2008}, improving the time
complexity is not possible using a splitter network, but it may be
that further improvements to the structure of the network 
could reduce the space complexity.

\bibliographystyle{plain}
\bibliography{paper}

\begin{thebibliography}{1}

\bibitem{AlistarhAGGG2010}
Dan Alistarh, Hagit Attiya, Seth Gilbert, Andrei Giurgiu, and Rachid Guerraoui.
\newblock Fast randomized test-and-set and renaming.
\newblock In Nancy~A. Lynch and Alexander~A. Shvartsman, editors, {\em DISC},
  volume 6343 of {\em Lecture Notes in Computer Science}, pages 94--108.
  Springer, 2010.

\bibitem{AttiyaKPWW2006}
Hagit Attiya, Fabian Kuhn, C.~Greg Plaxton, Mirjam Wattenhofer, and Roger
  Wattenhofer.
\newblock Efficient adaptive collect using randomization.
\newblock {\em Distributed Computing}, 18(3):179--188, 2006.

\bibitem{ChlebusK2008}
Bogdan~S. Chlebus and Dariusz~R. Kowalski.
\newblock Asynchronous exclusive selection.
\newblock In Rida~A. Bazzi and Boaz Patt-Shamir, editors, {\em PODC}, pages
  375--384. ACM, 2008.

\bibitem{MoirA1995}
Mark Moir and James~H. Anderson.
\newblock Wait-free algorithms for fast, long-lived renaming.
\newblock {\em Sci. Comput. Program.}, 25(1):1--39, 1995.

\end{thebibliography}

\end{document}